\documentclass[reprint,nofootinbib,aps,pra]{revtex4-2}

\usepackage{graphicx}
\usepackage{nicefrac}
\usepackage{dcolumn}
\usepackage{bm}
\usepackage{cancel}
\usepackage{hyperref}
\usepackage{amssymb}
\usepackage{subcaption}
\usepackage{comment}
\usepackage{mathtools}
\usepackage{empheq}
\usepackage{amsthm}
\usepackage{bbm}
\usepackage{mathrsfs}
\usepackage{amsfonts}
\usepackage{color} 
\usepackage{pbox}
\usepackage{xcolor} 
\usepackage{braket}
\usepackage{caption}
\usepackage{enumerate}
\usepackage{physics}

\hypersetup{
    colorlinks=true,
    linkcolor=blue,
    filecolor=magenta,      
    urlcolor=cyan,
}
\usepackage{scalerel,stackengine}
\usepackage{soul}
\soulregister\cite7
\soulregister\eqref7

\usepackage[OT2,T1]{fontenc}
\DeclareSymbolFont{cyrletters}{OT2}{wncyr}{m}{n}
\DeclareMathSymbol{\Sha}{\mathalpha}{cyrletters}{"58}

\usepackage{cleveref}

\makeatletter 

\renewcommand\onecolumngrid{
\do@columngrid{one}{\@ne}%
\def\set@footnotewidth{\onecolumngrid}
\def\footnoterule{\kern-6pt\hrule width 1.5in\kern6pt}%
}

\renewcommand\twocolumngrid{
        \def\footnoterule{
        \dimen@\skip\footins\divide\dimen@\thr@@
        \kern-\dimen@\hrule width.5in\kern\dimen@}
        \do@columngrid{mlt}{\tw@}
}%

\makeatother

\newcommand{\be}{\begin{equation}}
\newcommand{\ee}{\end{equation}}
\newcommand{\ben}{\begin{equation*}}
\newcommand{\een}{\end{equation*}}

\newcommand{\Zb}{\mathbb{Z}}
\newcommand{\Hc}{\mathcal{H}}

\newcommand{\Dc}{\mathfrak{D}}
\newtheorem{theorem}{Theorem}    
\newtheorem{proposition}{Proposition} 
\newtheorem{lemma}{Lemma}
\newtheorem{definition}{Definition}

\begin{document}

\title{Recurrence Time for Finite Quantum Systems}

\author{Chaitanya Gupta}
 \email{chai.gupta@bristol.ac.uk}
\author{Anthony J. Short}
 \email{Tony.Short@bristol.ac.uk}
\affiliation{
 H.H. Wills Physics Laboratory, University of Bristol, Tyndall Avenue, Bristol BS8 1TL, U.K}
\date{\today}

\newcommand{\add}[1]{\textcolor{teal}{\ul{#1}}}
\newcommand{\remove}[1]{\textcolor{red}{\st{#1}}}
\newcommand{\addmath}[1]{\textcolor{teal}{#1}}
\newcommand{\removemath}[1]{\textcolor{red}{\cancel{#1}}}

\graphicspath{ {./images/} }

\begin{abstract}
We study the time it takes for all states of a finite quantum system to  return simultaneously to their original configuration. In particular, we define the recurrence time for a quantum system to be the time at which all time-evolved states are close to their initial configuration, and at least one state has deviated significantly during this interval. Considering finite-dimensional quantum systems evolving unitarily, we find bounds on this notion of recurrence time, for continuous time and discrete time, by using Dirichlet's approximation theorem. We show how the problem of finding a bound on recurrence time can be related to approximating the difference of real numbers by rationals. We present a mathematical result on the latter, which we then use to obtain tighter bounds on recurrence time.
\end{abstract}

\maketitle

\section{Introduction}

How long does it take for a quantum system to come back to its original state? The Poincar\'e recurrence theorem proves that any classical bounded system will eventually return arbitrarily close to its initial state, and an analogous result applies to quantum systems evolving via a time-independent Hamiltonian with a discrete spectrum \cite{bocchieri1957quantum,peres1982recurrence,hogg1982recurrence}. Interestingly, it was noted by Wallace \cite{wallace2015recurrence} that finite dimensional quantum systems obey a stronger notion of recurrence, in which all states of the system return arbitrarily close to their initial states \emph{at the same time}. This notion of recurrence is referred to as uniform recurrence. 

 In this paper, we will bound the time for such a uniform recurrence of a quantum system, considering unitary evolution in both continuous time and discrete time (e.g. for a quantum walk or quantum cellular automata \cite{Bialynicki-Birula,farrelly2020review}). Bounds on the recurrence time for individual states have previously been obtained for continuous time evolution\footnote{See also `Note Added' at the end of the paper} \cite{bhattacharyya1986estimates,gimeno2017upper,venuti2015recurrence}. For discrete time evolution, expected recurrence times involving repeated measurements of the time evolved state have also been investigated \cite{grunbaum2013recurrence,sinkovicz2015quantized}. While we focus on approximate recurrence in this work, \cite{Anand2026quantumrecurrences} studies exact recurrence for Floquet systems. Another interesting work has linked the framework for equilibration and recurrence together \cite{equilibrium}.

 Our approach incorporates both pure and mixed states, and uses the  trace distance to quantify the strength of recurrence. To bound the recurrence time, we utilise the simultaneous version of Dirichlet approximation theorem for approximation of real numbers by rationals \cite[p.~27]{ApproximationTheorem}. The theorem states that given any $d$ real numbers $\alpha_1,\alpha_2,\dots,\alpha_d$ and any integer $N>1$, there exists integers $q,l_1,l_2,\dots,l_d$ such that $1\leq q \leq N^d$ and
 \be
    \abs{\alpha_i  - l_i/q} \leq \frac{1}{qN} \text{ for all } i. 
    \label{eq:approximationtheoremmain}
 \ee
 Moreover, we include some developments of this approach where we only care about approximating differences between real numbers, which may be of mathematical interest in other situations. 

The paper is structured as follows. In Section \ref{sec:recurrencetime}, we define recurrence time for a quantum system. In Section \ref{sec:weakerbounds}, we use the simultaneous version of Dirichlet's approximation theorem to obtain bounds on recurrence time (for both continuous and discrete time systems). In Section \ref{sec:aprox}, we present a result for approximation of difference of real numbers by difference of rational numbers. These results allow us to obtain better bounds for recurrence time in Section \ref{sec:betterbounds}. We conclude with a discussion in Section \ref{sec:discussion}.

\section{Quantum Recurrence Time}
\label{sec:recurrencetime}

Let us first discuss what we mean by recurrence time for an individual state. Suppose we have some initial state and we let it evolve for some time. We consider some small number $\epsilon$. Now, if for some time $t_{\mathrm{r}}$, the time evolved state is at most $\epsilon$ distance away from the initial state, we might want to call $t_{\mathrm{r}}$ a recurrence time. However, it might be the case that the state never got farther than $\epsilon$ from its initial state. Thus, we will require that the state has to have evolved more than $\epsilon$ away from the initial state at some time before $t_{\mathrm{r}}$ in order to count as a non-trivial recurrence.

In order to define this more formally, consider a Hilbert space $\Hc$, with $D(\Hc)$  the space of all valid density operators on $\Hc$. We will use the trace distance as a measure of distance between two states, which quantifies the probability of determining which state you have via the optimal measurement. Given some $\rho,\sigma\in D(\Hc)$, the trace distance between them is defined as $T(\rho,\sigma) \equiv \tr |\rho-\sigma |/2$.

\begin{definition}[State Recurrence]
Suppose for some initial state $\rho(0) \in \Dc(\Hc)$, we have a family of time-evolved states $\rho(t) \in \Dc(\Hc)$, with $t\in \tau$, where $\tau = \mathbb{R}^{+}$ (continuous time) or $\tau = \mathbb{Z}^{+}$ (discrete time). We say that the state $\rho(0)\in \Dc(\Hc)$ is $\epsilon$-recurrent (with $0 \leq \epsilon < 1$) at time $t \in \tau$ if
\be
    T(\rho(t_{\mathrm{r}}),\rho(0)) \leq \epsilon, 
\ee
and there exists some $t^\prime < t$ (with $t^\prime \in \tau$) such that
\be
    T(\rho(t^\prime),\rho(0)) > \epsilon. 
\ee
\end{definition}

The second condition is important for the recurrence to be non-trivial. (Otherwise, we consider the recurrence to be trivial.) This definition is close to the notion of recurrence considered in \cite{bocchieri1957quantum,gimeno2017upper}. However, we are interested in a stronger, state-independent notion of recurrence. 

\begin{definition}[System Recurrence]
Suppose for some Hilbert space $\Hc$, for every initial state $\rho(0)\in\Dc(\Hc)$, we have a family of time-evolved states $\rho(t)\in D(\Hc)$, with $t\in \tau$, where either $\tau = \mathbb{R}^{+}$ (continuous time) or $\tau = \mathbb{Z}^{+}$ (discrete time). We say that the system is $\epsilon$-recurrent (with $0\leq \epsilon < 1$) at time $t_{\mathrm{r}} \in \tau$ if (i) for all $\rho(0)\in\Hc$,
\be
    T(\rho(t_{\mathrm{r}}),\rho(0)) \leq \epsilon, 
\ee
and (ii) there exists some state $\sigma(0)\in\Dc(\Hc)$ and some time $t^\prime\leq t_{\mathrm{r}}$ (with $t^{\prime} \in \tau$) such that
\be
    T(\sigma(t^\prime),\sigma(0)) > \epsilon. 
\ee
\end{definition}

In other words, for a system to be recurrent at some time, all states must recur trivially or non-trivially with at least one state recurring non-trivially\footnote{Note that we do not require all states to recur non-trivially, as some states might be stationary (e.g. energy eigestates for Hamiltonian evolution).}. This is close to the notion of uniform recurrence in \cite{wallace2015recurrence}. 

We are primarily interested in continuous time Hamiltonian evolution and discrete-time unitary evolution. In these cases, time-evolution preserves the purity of the initial state. Thus, for every initial pure state $\ket{\psi(0)}$, we can write the time-evolved states as $\ket{\psi(t)}$. Then, the following lemma follows from the joint convexity of the trace distance (See Appendix \ref{app:lemmaproof} for proof). 

\begin{lemma}
Suppose we have a quantum system (associated with some Hilbert space $\Hc$) where time-evolution is unitary. The quantum system is $\epsilon$-recurrent at time $t_{\mathrm{r}}$ iff (i) for all pure states $\ket{\psi(0)}\in\Hc$, 
\be
    T(\ket{\psi(t_{\mathrm{r}})},\ket{\psi(0)}) \leq \epsilon,
\ee
and (ii) there exists some pure state $\ket{\varphi(0)}\in\Hc$ and some time $t^\prime < t_{\mathrm{r}}$ such that
\be
    T(\ket{\varphi(t^\prime)},\ket{\varphi(0)}) >\epsilon.
\ee
\label{prop:purity}
\end{lemma}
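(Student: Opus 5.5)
Both directions of the ``iff'' will be proved by reducing mixed states to pure ones. The tools are the spectral decomposition and the joint convexity of the trace distance. Throughout, we use that unitary evolution $\rho(t)=U_t\rho(0)U_t^\dagger$ is linear in the initial state. For pure states we identify $\ket{\psi}$ with $\ket{\psi}\!\bra{\psi}$.

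For the forward direction, assume the system is $\epsilon$-recurrent at $t_{\mathrm{r}}$. Condition (i) of the lemma is immediate, since every pure state is a density operator and so is covered by condition (i) of System Recurrence. For condition (ii), take the state $\sigma(0)$ and time $t'$ from the definition, with $T(\sigma(t'),\sigma(0))>\epsilon$. Decompose $\sigma(0)=\sum_k p_k\ket{\varphi_k}\!\bra{\varphi_k}$. By linearity of the evolution, $\sigma(t')=\sum_k p_k\ket{\varphi_k(t')}\!\bra{\varphi_k(t')}$. Joint convexity then gives
\begin{equation*}
\epsilon< T(\sigma(t'),\sigma(0))\le \sum_k p_k\, T(\ket{\varphi_k(t')},\ket{\varphi_k(0)}).
\end{equation*}
A convex combination exceeding $\epsilon$ forces some term to exceed $\epsilon$, so some pure $\ket{\varphi_k(0)}$ satisfies condition (ii) of the lemma.

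For the converse, assume (i) and (ii) hold for pure states. Condition (ii) of System Recurrence follows directly by taking $\sigma(0)=\ket{\varphi(0)}\!\bra{\varphi(0)}$. For condition (i), write an arbitrary $\rho(0)=\sum_k p_k\ket{\psi_k}\!\bra{\psi_k}$ and evolve each component with the same unitary. Joint convexity gives
\begin{equation*}
T(\rho(t_{\mathrm{r}}),\rho(0))\le\sum_k p_k T(\ket{\psi_k(t_{\mathrm{r}})},\ket{\psi_k(0)})\le\epsilon.
\end{equation*}

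No step is genuinely hard. The only points needing care are two. First, linearity of unitary evolution ensures that the decomposition of the initial state is carried along consistently at every time. Second, the two definitions phrase the time condition differently: System Recurrence allows $t'\le t_{\mathrm{r}}$, while the lemma requires $t'<t_{\mathrm{r}}$. At $t'=t_{\mathrm{r}}$, condition (i) for that state forces $T\le\epsilon$, so the strict inequality $T>\epsilon$ cannot hold there. Hence $t'\le t_{\mathrm{r}}$ automatically means $t'<t_{\mathrm{r}}$, and I would note this explicitly so the two definitions match.
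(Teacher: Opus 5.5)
Your proof is correct and follows the same route as the paper's: decompose the mixed state into pure states, carry the decomposition through the unitary evolution, and apply joint convexity of the trace distance (with the averaging argument for condition (ii)); the remaining directions are immediate. Your remark that the $t'\le t_{\mathrm{r}}$ in the System Recurrence definition becomes $t'<t_{\mathrm{r}}$, because condition (i) rules out $t'=t_{\mathrm{r}}$, fills a small gap that the paper's proof passes over silently.
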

Here, we have used $T(\ket{\psi}\!,\ket{\phi})$ as a shorthand for $T(\ketbra{\psi}\!,\ketbra{\phi})=\sqrt{1-|\!\braket{\psi}{\phi}\!|^2}$.

Thus, for a system undergoing unitary evolution, we only need to consider pure states when proving system recurrence. More precisely, a unitary quantum system is recurrent at some time iff all pure states recur trivially or non-trivially at that time with at least one pure state recurring non-trivially. 

\section{Bounds on Recurrence Time using Dirichlet's Approximation Theorem}
\label{sec:weakerbounds}

\subsection{Continuous Time Hamiltonian Evolution}

We first consider quantum systems where the evolution is governed by a time-independent Hamiltonian with continuous time. Given a Hamiltonian $H$ (acting on some Hilbert space $\Hc$), for any state $\ket{\psi(0)}\in\Hc$, the family of time-evolved states is given by $\ket{\psi(t)}= e^{-iHt}\ket{\psi(0)}$ with $t\in\tau=\mathbb{R}^{+}$. Here and throughout the paper, we set $\hbar=1$. We can obtain a bound for recurrence times using the Dirichlet's approximation theorem.

\begin{theorem}
 Consider a Hamiltonian $H$ with maximum and minimum energy eigenvalues being $E_{\max}$ and $E_{\min}$ respectively. Suppose that the energy spectrum is discrete and finite with $d \geq 2$ distinct energy eigenvalues. Then, the system is $\epsilon$-recurrent (with $0<\epsilon<1$) at time $t_{\mathrm{r}}$ such that 
        \begin{equation*}
            \quad t_{\mathrm{r}} \leq \frac{2\pi}{E_{\max}-E_{\min}}\left(2\left\lceil \frac{\pi}{\epsilon}\right\rceil\right)^{d-2}.
        \end{equation*}
    \label{thm:hamiltonianfirstbound}
\end{theorem}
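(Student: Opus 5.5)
The idea is to reduce uniform recurrence to a simultaneous return of all energy phases to multiples of $2\pi$, and then produce such a time with the simultaneous Dirichlet approximation theorem \eqref{eq:approximationtheoremmain}. By Lemma \ref{prop:purity}, it suffices to work with pure states. Label the distinct eigenvalues $E_1=E_{\min}<E_2<\dots<E_d=E_{\max}$ and let $P_j$ be the corresponding spectral projectors. For any $\ket{\psi(0)}$, write $p_j=\bra{\psi(0)}P_j\ket{\psi(0)}$. Then
\begin{equation*}
|\braket{\psi(0)}{\psi(t)}| = \Big|\sum_j p_j e^{-i(E_j-E_{\min})t}\Big|,
\end{equation*}
where the global phase $e^{-iE_{\min}t}$ has been dropped. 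So condition (i) will follow once every relative phase $(E_j-E_{\min})t_{\mathrm{r}}$ lies within a small distance of $2\pi\mathbb{Z}$.

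\textbf{Choosing $t_{\mathrm{r}}$.} Set $\Delta=E_{\max}-E_{\min}$ and $\alpha_j=(E_j-E_{\min})/\Delta$ for $j=2,\dots,d-1$. These are $d-2$ real numbers; the endpoints $\alpha_1=0$ and $\alpha_d=1$ need no approximation. Apply the Dirichlet theorem with $N=2\lceil \pi/\epsilon\rceil$, which satisfies $N>1$ since $\epsilon<1$. This gives an integer $1\le q\le N^{d-2}$ and integers $l_j$ with $|q\alpha_j-l_j|\le 1/N$. When $d=2$ there is nothing to approximate and we take $q=1$. Define
\begin{equation*}
t_{\mathrm{r}}=\frac{2\pi q}{\Delta}\le \frac{2\pi}{\Delta}\left(2\left\lceil\frac{\pi}{\epsilon}\right\rceil\right)^{d-2}.
\end{equation*}
Then the phase $\theta_j=(E_j-E_{\min})t_{\mathrm{r}}=2\pi q\alpha_j$ lies within $2\pi/N\le \epsilon$ of $2\pi l_j$. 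For $j=1$ and $j=d$ the phase is exactly a multiple of $2\pi$.

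\textbf{Condition (i).} This is the step that needs the most care, because the phase error must be converted into a bound on the trace distance. Write $\delta_j$ for the distance of $\theta_j$ to $2\pi\mathbb{Z}$, so $|\delta_j|\le\epsilon$. Then
\begin{equation*}
|\braket{\psi(0)}{\psi(t_{\mathrm{r}})}| \ge \sum_j p_j\cos\delta_j \ge 1-\epsilon^2/2,
\end{equation*}
using $\cos x\ge 1-x^2/2$ and $\sum_j p_j=1$. Hence
\begin{equation*}
T^2 = 1-|\braket{\psi(0)}{\psi(t_{\mathrm{r}})}|^2 \le 1-(1-\epsilon^2/2)^2 = \epsilon^2-\epsilon^4/4 \le \epsilon^2,
\end{equation*}
so $T\le\epsilon$ for every pure state. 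The factor $2$ inside $N$ is exactly what makes $2\pi/N\le\epsilon$, which is why the bound carries $\left(2\lceil\pi/\epsilon\rceil\right)^{d-2}$.

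\textbf{Condition (ii).} Take $\ket{\varphi(0)}=(\ket{E_{\min}}+\ket{E_{\max}})/\sqrt2$, built from eigenvectors of the two extreme eigenvalues, and $t'=\pi/\Delta$. Since $q\ge1$, we have $t'<t_{\mathrm{r}}$. At $t'$ the relative phase is $\pi$, so $\ket{\varphi(t')}$ is orthogonal to $\ket{\varphi(0)}$ and $T=1>\epsilon$. This shows the recurrence at $t_{\mathrm{r}}$ is non-trivial, and so the system is $\epsilon$-recurrent at $t_{\mathrm{r}}$.
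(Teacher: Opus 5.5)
Your proposal is correct and follows essentially the paper's route: the same application of Dirichlet's theorem with parameter $2\lceil \pi/\epsilon\rceil$ to the $d-2$ normalized interior energies $(E_k-E_{\min})/(E_{\max}-E_{\min})$, the same choice $t_{\mathrm{r}}=2\pi q/(E_{\max}-E_{\min})$, and the same witness state $(\ket{E_{\min}}+\ket{E_{\max}})/\sqrt{2}$. The only difference is the phase-to-trace-distance step. You lower-bound $|\langle\psi(0)|\psi(t_{\mathrm{r}})\rangle|$ by its real part and apply $\cos x\geq 1-x^2/2$ to each phase, each lying within $\epsilon$ of $2\pi\mathbb{Z}$. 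The paper instead bounds $T^2$ by a variance and applies Popoviciu's inequality to pairwise phase differences of size at most $2\epsilon$; both arguments give $T\leq\epsilon$.
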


\begin{proof}
Suppose we are dealing with a Hilbert space of dimension $D$. Let the eigenvalues and eigenstates of $H$ be $\{E_k\}_{k=1}^{D}$ and $\{\ket{k}\}_{k=1}^{D}$ respectively. As $H$ has $d$ distinct eigenvalues, $D\geq d$. We arrange the eigenvalues such that (i) $E_1=E_{\min}$ and $E_d=E_{\max}$, and (ii) $E_1,E_2,\dots,E_d$ are distinct. Thus, for the degenerate case, for any $k>d$, $E_k=E_j$ for some $j\leq d$.

Let
\be
    \ket{\psi(0)} = \sum_{k=1}^{D} c_k \ket{k}
\ee
be some arbitrary initial state.

We have, after time t,
\be
    \ket{\psi(t)} = \sum_{k=1}^{D} e^{-i E_k t} c_k \ket{k}.
\ee

Now, the trace distance, between the two states is given by,
\begin{align}
    T(\ket{\psi(t)},\ket{\psi(0)})^2 \!&= 1 - |\braket{\psi(t)}{\psi(0)}|^2 \nonumber\\
    &\!\!\!\!\!\!\!\!= 1 - \sum_{j,k}\abs{c_j}^2\abs{c_k}^2 e^{-i(E_j\!-\!E_k)t} \nonumber\\
    &\!\!\!\!\!\!\!\!= 1 - \sum_{j,k} \abs{c_j}^2\abs{c_k}^2 \cos{[(E_j\!-\!E_k)t]} \nonumber\\
    &\!\!\!\!\!\!\!\!=  \sum_{j,k} \abs{c_j}^2\abs{c_k}^2 (1- \cos{[(E_j\!-\!E_k)t]}). 
    \label{eq:tracedistance}
\end{align}

Now, suppose we can find a time $t_{\mathrm{r}}$ such that
\be
    \abs{(E_j-E_k)t_{\mathrm{r}} - 2\pi (l_{j}-l_{k})} \leq 2 \epsilon
    \label{eq:energyapprox}
\ee
for some integers $l_{k}$, and $j,k\in\{1,2,\dots,d\}$. (It automatically follows that the above equation holds for $j,k=1,2,\dots,D$.) Defining $\alpha_k = E_k t_{\mathrm{r}} - 2\pi l_k$, we can rewrite the above equation as 
\be
    \abs{(E_j-E_k)t_{\mathrm{r}} - 2\pi (l_{j}-l_{k})} = \abs{(\alpha_j-\alpha_k)} \leq 2 \epsilon.
    \label{eq:energyapproxderived}
\ee

Then, it follows from Eq.~\eqref{eq:tracedistance} that
\begin{align}
    T(\ket{\psi(t_{\mathrm{r}})},\ket{\psi(0)})^2
    &= \sum_{j,k} \abs{c_j}^2\abs{c_k}^2 (1- \cos{[(\alpha_j\!-\!\alpha_k)t]})
    \nonumber \\
    &\leq   \frac{1}{2}\sum_j \sum_k \abs{c_j}^2\abs{c_k}^2 (\alpha_j-\alpha_k)^2
    \label{eq:tracedistanceinterim}
\end{align}
where we have used the periodicity of cosine in the first line and the inequality $1-\cos(x) \leq x^2/2$ in the second line.

If we let $A$ be a random variable with outcomes $\alpha_j$ and the respective probabilities being $p_j= \abs{c_j}^2$, we get

\begin{align}
    \frac{1}{2}\sum_{j,k} p_jp_k (\alpha_j-\alpha_k)^2 &= \frac{1}{2}\sum_{j,k} p_jp_k (\alpha_j^2+\alpha_k^2\!-\!2\alpha_j\alpha_k)
    \nonumber\\
    &= \langle A^2 \rangle - \langle A \rangle^2 = \text{Var}(A).
\end{align}

Thus, we have
\begin{align}
    T(\ket{\psi(t_{\mathrm{r}})},\ket{\psi(0)})^2
    \leq\text{Var}(A)
    \label{eq:tracedistancevariance}.
\end{align}

Then, using Popoviciu's inequality on variances, we have
\be
    \text{Var}(A) \leq \frac{(\max_k{\alpha_k}-\min_k{\alpha_k})^2}{4} \leq \frac{(2\epsilon)^2}{4} = \epsilon^2.
\ee

Finally, we have $T(\ket{\psi(t_{\mathrm{r}})},\ket{\psi(0)})\leq \epsilon$. Now, we have reduced the problem to proving Eq.~\eqref{eq:energyapprox} with $t_{\mathrm{r}}$ following the appropriate bounds.

We can rewrite Eq.~\eqref{eq:energyapprox} as
\be
     \abs{(E_j-E_k)\frac{t_0}{2\pi}q - (l_{j}-l_{k})} \leq  \frac{\epsilon}{\pi}.
     \label{eq:energyDiffMain}
\ee
where
\be
    t_0 = \frac{2\pi}{E_{d}-E_{1}}= \frac{2\pi}{E_{\text{max}}-E_{\text{min}}}
\ee
and we let $t_{\mathrm{r}} = qt_0$.

With $N=\lceil\pi/\epsilon\rceil$, using the simultaneous version of Dirichlet's approximation theorem (Eq.~\eqref{eq:approximationtheoremmain}), there exists some integer $q$ satisfying $1\leq q \leq (2N)^{d-2}$ and some integers $l_k$, such that
\be
      \abs{(E_k-E_1)\frac{t_0}{2\pi}q-l_{k}} < \frac{1}{2N}
      \label{eq:dirichletBoundEnergy}
\ee
where $k\in\{2,\dots,d-1\}$. Now, note that Eq.~\eqref{eq:dirichletBoundEnergy} also holds true for (i) $k=d$ with $l_d = q$ and (ii) $k=1$ with $l_1 = 0$ for any $q$, and (iii) $k>d$ because of the way we have arranged the energy eigenvalues. Thus, we can say that there exists some integers $l_k$  and $q$ with $1\leq q \leq (2N)^{d-2}$ such that Eq.~\eqref{eq:dirichletBoundEnergy} holds for all $k$.

Now, for any $j,k$, consider
\begin{align}
    \abs{(E_j-E_k)\frac{t_0}{2\pi}q-(l_j-l_k)} \leq& 
    \nonumber \\ \abs{(E_j-E_1)\frac{t_0}{2\pi}q-l_{j}}+&\abs{(E_k-E_1)\frac{t_0}{2\pi}q-l_{k}} < \frac{1}{N}.
\end{align}
Thus, we have obtained Eq.~\eqref{eq:energyDiffMain}. 

Then, noting that $t_{\mathrm{r}} = q t_0$, we get
\be
 t_0 \leq t_{\mathrm{r}} \leq t_0 \left(2\left\lceil \frac{\pi}{\epsilon}\right\rceil\right)^{d-2}.
 \label{eq:range}
\ee

Now, we need to show that there exists at least one state that has non-trivial recurrence. 
Consider the state 
\be
    \ket{\varphi(0)} = \frac{1}{\sqrt{2}}(\ket{1}+\ket{d}).
\ee
Then, we have, for any time $t$,
\begin{align}
    T(\ket{\varphi(t)},\ket{\varphi(0)})^2 = \sin^2((E_d-E_1)t/2).
\end{align}

Now, we can always find a $t^\prime$ with $0\leq t^\prime(E_d-E_1)/2 < \pi$ such that $\sin^2((E_d-E_1)t^\prime/2)> \epsilon^2$ (for any $\epsilon<1$). Thus, we have, 
\begin{align}
    T(\ket{\varphi(t^\prime)},\ket{\varphi(0)}) = \abs{\sin((E_d-E_1)t^\prime/2)} > \epsilon.
\end{align}

Note that $t^\prime < t_0 \leq t_{\mathrm{r}}$. This concludes our proof.
\end{proof}

\subsection{Discrete Time Unitary Evolution}

We shall now consider the case when time is discrete, and the system evolves in each time step via a unitary $U$. For any initial state $\ket{\psi(0)}$, we can write the time-evolved states as $\ket{\psi(n)}=U^{n}\ket{\psi(0)}$ where $n\in\tau = \Zb^{+}$.

Now, similarly to the continuous time case, we can get a bound for recurrence time when considering discrete time unitary evolution using Dirichlet's approximation theorem. Before we present the bound, we define distance on a circle via
\be
    R(\theta_1,\theta_2) = \min (\abs{\theta_1-\theta_2},2\pi-\abs{\theta_1-\theta_2})
\ee
where $\theta_1,\theta_2\in[0,2\pi)$.

\begin{theorem}
Suppose we have a unitary $U$, having a discrete and finite spectrum, with $d \geq 2 $ distinct eigenvalues $\{e^{i\phi_k}\}$ where $0\leq \phi_k<2\pi$. Then, for any $0<\epsilon\leq1/2$, the quantum system is $\epsilon$-recurrent at some time $m_{\mathrm{r}}\in\mathbb{N}$ such that
\begin{align*}
        m_{\mathrm{r}} &\leq \frac{\pi}{\max_{jk}R(\phi_j,\phi_k)}\left(2\left\lceil \frac{\pi}{\epsilon}\right\rceil\right)^{d-1}.
\end{align*}
\end{theorem}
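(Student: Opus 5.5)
We follow the proof of Theorem~\ref{thm:hamiltonianfirstbound}, with the eigenphases $\phi_k$ playing the role of $E_k t$. By Lemma~\ref{prop:purity} it suffices to treat pure states. Write $\ket{\psi(0)}=\sum_k c_k\ket{k}$ in the eigenbasis of $U$, so that $\ket{\psi(m)}=\sum_k e^{i m\phi_k}c_k\ket{k}$. The same computation as in Eq.~\eqref{eq:tracedistance} gives
\begin{equation*}
T(\ket{\psi(m)},\ket{\psi(0)})^2=\sum_{j,k}|c_j|^2|c_k|^2\bigl(1-\cos[(\phi_j-\phi_k)m]\bigr).
\end{equation*}
Suppose we find an integer $m_{\mathrm{r}}$ and integers $l_k$ with
\begin{equation*}
\abs{(\phi_j-\phi_k)m_{\mathrm{r}}-2\pi(l_j-l_k)}\leq 2\epsilon
\end{equation*}
for all $j,k$. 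Then the variance argument with $1-\cos x\le x^2/2$ and Popoviciu's inequality goes through verbatim, giving $T\le\epsilon$ for every state.

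\textbf{Diophantine step.} Apply the simultaneous Dirichlet theorem, Eq.~\eqref{eq:approximationtheoremmain}, to the $d-1$ reals $\alpha_k=(\phi_k-\phi_1)/2\pi$ for $k=2,\dots,d$, with $N'=2\lceil\pi/\epsilon\rceil$. This gives $1\le q\le N'^{\,d-1}$ and integers $l_k$ with $\abs{q\alpha_k-l_k}\le 1/N'\le \epsilon/(2\pi)$. Set $l_1=0$; degenerate eigenvalues inherit the integers of their representatives. The triangle inequality then gives $\abs{(\phi_j-\phi_k)q-2\pi(l_j-l_k)}\le 2\epsilon$. Unlike the continuous case, there is no free time rescaling that makes one difference exactly rational, which is why all $d-1$ differences are approximated and the exponent is $d-1$.

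\textbf{Prefactor and non-triviality.} Let $\Delta=\max_{jk}R(\phi_j,\phi_k)>0$, attained by the pair $(a,b)$, and set $m_0=\lceil \pi/(2\Delta)\rceil$ or similar. Take the state $\ket{\varphi(0)}=(\ket{a}+\ket{b})/\sqrt2$, for which $T(\ket{\varphi(m)},\ket{\varphi(0)})=\abs{\sin(m(\phi_a-\phi_b)/2)}$.
\begin{itemize}
\item If $\Delta\ge\pi/2$, then $m=1$ already gives $T\ge\sin(\Delta/2)\ge\sin(\pi/4)>1/2\ge\epsilon$.
\item Otherwise, some $m'\le\lceil\pi/(2\Delta)\rceil$ has $m'\Delta\in[\pi/2,\pi]$ roughly, since successive multiples of $\Delta$ step by less than $\pi/2$. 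Hence $\abs{\sin(m'\Delta/2)}\ge\sin(\pi/4)>\epsilon$; this is where $\epsilon\le 1/2$ is used.
\end{itemize}
To obtain a recurrence time that is at least $m'$ and bounded by $(\pi/\Delta)N'^{\,d-1}$, apply Dirichlet not to $\alpha_k$ but to $m_0\alpha_k$ with $m_0=\lceil\pi/(2\Delta)\rceil\le\pi/\Delta$ (valid when $\Delta<\pi/2$). This yields $m_{\mathrm{r}}=m_0q\ge m_0\ge m'$, with $m_{\mathrm{r}}\le(\pi/\Delta)N'^{\,d-1}$.

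\textbf{Main obstacle.} The delicate point is this bookkeeping: choosing $m_0$ so that a violation time $m'\le m_0$ exists while $m_0\le\pi/\Delta$, and handling the edge case $\Delta\ge\pi/2$ where the prefactor $\pi/\Delta\ge1$ is still harmless. The approximation itself is routine.
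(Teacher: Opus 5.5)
Your proposal is correct and follows essentially the same route as the paper. It uses the same variance and Popoviciu reduction, applies Dirichlet to the $d-1$ scaled phase differences $m_0(\phi_k-\phi_1)/2\pi$ with $m_{\mathrm{r}}=m_0q$, and builds the two-level witness state on the pair maximizing $R$. Your bookkeeping ($m_0=\lceil\pi/(2\Delta)\rceil$ with the window $[\pi/2,\pi)$, rather than the paper's $(\pi/3,\pi]$) is in fact slightly cleaner: it guarantees that $m_0$ is an integer with $m'\le m_0\le\pi/\Delta$, whereas the paper takes $m_0=\pi/\max_{jk}R(\phi_j,\phi_k)$, which need not be an integer.
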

\begin{proof}
    Suppose that the dimension of the system is $D$. Then, suppose $U$ has eigenvalues $\{e^{i\phi_k}\}_{k=1}^{D}$ and eigenstates $\{\ket{k}\}_{k=1}^{D}$, with $\phi_k\in[0,2\pi)$. As $U$ has $d$ distinct eigenvalues, we have $D\geq d$. Moreover, we can arrange the eigenvalues such that $\phi_1,\phi_2,\dots,\phi_d$ are all distinct.

    Suppose we have an arbitrary initial state $\ket{\psi(0)}=\sum_k c_k \ket{k}$. Then, we have $\ket{\psi(m)}=\sum_k c_k e^{i\phi_k m}\ket{k}$ for any $m\in\mathbb{Z}^{+}$. Now, similar to the proof of theorem \ref{thm:hamiltonianfirstbound}, we have 
    \be
       T(\ket{\psi(0)},\ket{\psi(m)}) = \sum_{j,k} \abs{c_j}^2\abs{c_k}^2 (1- \cos{[(\phi_j\!-\!\phi_k)m]})
       \label{eq:discretedifference}
    \ee
    for any time $m$.
    
     Suppose we can find a time $m_{\mathrm{r}}$ such that
    \be
        \abs{(\phi_j-\phi_k)m_{\mathrm{r}} - 2\pi (l_{j}-l_{k})} \leq 2 \epsilon
        \label{eq:energyapproxdiscrete}
    \ee
    for all $j,k$ and where $l_j$ are some integers. Then, following similar steps as before, we get
    \be
         T(\ket{\psi(0)},\ket{\psi(m_{\mathrm{r}})}) \leq \epsilon.
    \ee
    
     Let $m_0$ be an arbitrary positive integer. With $N=\lceil\pi/\epsilon\rceil$, using Dirichlet's approximation theorem (Eq.~\eqref{eq:approximationtheoremmain}), there exists some integer $q$ satisfying $1\leq q \leq (2N)^{d-2}$ and some integers $l_k$ such that
    \be
          \abs{(\phi_k-\phi_1)\frac{m_0}{2\pi}q-l_{k}} < \frac{1}{2N} \label{eq:dirichletBoundEnergyDiscrete}
    \ee
    where $k\in\{1,2,\dots,d-1\}$. Now, note that Eq.~\eqref{eq:dirichletBoundEnergyDiscrete} also holds true for (i) $k=1$ with $l_d = 0$ and any $q$, and (ii) $k>d$ because of the way we have arranged the eigenvalues. 
    Now, for any $j,k$, we have
    \begin{align}
        \abs{(\phi_j-\phi_k)\frac{m_0}{2\pi}q-(l_j-l_k)} \leq& 
        \nonumber \\ 
        \abs{(\phi_j-\phi_1)\frac{m_0}{2\pi}q-l_{j}}+&\abs{(\phi_k-\phi_1)\frac{m_0}{2\pi}q-l_{k}} < \frac{1}{N}.
    \end{align}
    Thus, we have obtained Eq.~\eqref{eq:energyapproxdiscrete} with $m_{\mathrm{r}} =  m_0q$. 
    
    To summarise, we have shown that for any positive integer $m_0$, we can find an $m_{\mathrm{r}}$ such that 
     \be
        m_0 \leq m_{\mathrm{r}} \leq m_0\left(2\left\lceil \frac{\pi}{\epsilon}\right\rceil\right)^{d-1}
        \label{eq:generalm0}
     \ee
     and all states are (trivially or non-trivially) recurrent at $m_{\mathrm{r}}$. 
     
     We now need to pick an $m_0$ such that we have at least one state having non-trivial recurrence. Consider the state
    \be
        \ket{\varphi(0)}=\frac{1}{\sqrt{2}}(\ket{1}+\ket{2})
    \ee
    where $\ket{1}$ and $\ket{2}$ are any arbitrary eigenstates of $U$ with distinct eigenvalues.
    After some time $m\in\Zb$, we have,
    \be
        \ket{\varphi(m)}=\frac{1}{\sqrt{2}}(e^{im\phi_1}\ket{1}+e^{im\phi_2}\ket{2}).
    \ee
    We also have,
    \begin{align}
        T(\ket{\varphi(m)},\ket{\varphi(0)})^2 
        &= \sin^2((\phi_2-\phi_1)m/2)
        \nonumber\\
        &=\sin^2(R(\phi_1,\phi_2)m/2).
        \label{eq:distancefinal}
    \end{align}
    
    Now, note that there always exists an integer $m^\prime$ such that $m^\prime R(\phi_1,\phi_2)\in(\pi/3,\pi]$ for all $0<R(\phi_1,\phi_2)\leq\pi$. Thus, for such an $m^\prime$, we have
    \begin{align}
         T(\ket{\varphi(m^\prime)},\ket{\varphi(0)})^2&= 
         \sin^2(R(\phi_1,\phi_2)m^\prime/2) 
         \nonumber\\
         &> \sin^2(\pi/6) =1/4 \geq \epsilon^2,
    \end{align}
    and
    \be
        m^\prime < \frac{\pi}{R(\phi_1,\phi_2)}.
    \ee
    
    Now, we pick $\phi_1$ and $\phi_2$ such that $R(\phi_1,\phi_2)=\max_{jk}R(\phi_j,\phi_k)$. Thus, we have
    \be
        m^\prime < \frac{\pi}{\max_{jk}R(\phi_j,\phi_k)}.
    \ee
    Finally, picking $m_0 = {\pi}/{\max_{jk}R(\phi_j,\phi_k)}$ and substituting in Eq.~\eqref{eq:generalm0}, we obtain the bound given in the theorem.
\end{proof}

\section{Approximation of Difference of Reals by Rationals}
\label{sec:aprox}

Now, note that the approach above concerns approximating individual energies and phases by rational approximations (via Eq.~\eqref{eq:dirichletBoundEnergy} and Eq.~\eqref{eq:dirichletBoundEnergyDiscrete}). However, the recurrence behaviour really depends on energy differences given by
\be
    \abs{(\phi_j-\phi_k)m_{\mathrm{r}} - 2\pi (l_{j}-l_{k})} \leq 2 \epsilon.   \tag{\ref{eq:energyapproxdiscrete}}
\ee
Here, we show that we can obtain improved bounds by optimizing the rational approximations of such differences more directly. 

To make the situation a bit more general, for any set of real numbers  $\{\alpha_i\}_{i=1}^d$ and any natural number $N$, we want to find integers $\{l_{i}\}_{i=1}^{d}$ and an integer $q$ such that\footnote{Note that this is a strict inequality, and hence, a slightly stronger condition than Eq.~\eqref{eq:energyapproxdiscrete}.} 
\begin{equation}
    \abs{(\alpha_i-\alpha_j)-(l_{i}-l_{j})/q} < 1/Nq.
    \label{eq:approximationtheorem}
\end{equation}
Furthermore, we want to find the smallest integer $M$ (dependent on $N$ and $d$) such that $1 \leq q \leq M$. Using the simultaneous version of Dirichlet's approximation theorem and following a similar approach to that used earlier, we can obtain the bound $M=(2N)^{d-1}$.

How can we do better? Note that one approach to proving the Dirichlet's approximation theorem involves the pigeonhole principle\footnote{If you have $n+1$ items places in $n$ pigeonholes, one of the pigeonholes must contain at least two items.}. The approach considered earlier, focussing on approximating each $\alpha_i$ individually, corresponds to dividing the parameter space $[0,1)^{d-1}$ into pigeonholes which are small hypercubes $[\,0,1/(2N)\,)^{d-1}$. However, when optimized to approximate differences between the rational approximations, it is helpful to use pigeonholes of different shapes, as shown in figure \ref{fig:shapeoftiles}. It turns out that $M$ corresponds to the number of pigeonholes, and by making each individual pigeonhole larger, $M$ is reduced. This leads to improved bounds for situations in which $N$ is sufficiently large (corresponding to good approximation), as given below. These might be useful in other situations in which the differences between approximated quantities play a central role. 

\begin{figure}
	\centering
	\begin{subfigure}{0.45\linewidth}
        \centering
		\includegraphics[width=\linewidth]{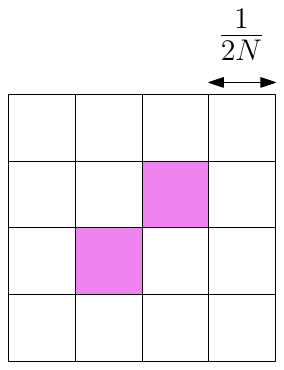}
		\caption{Region $\mathcal{P}$}
        \label{fig:regionP}
	\end{subfigure}
    \hfill
    \begin{subfigure}{0.45\linewidth}
        \centering
		\includegraphics[width=\linewidth]{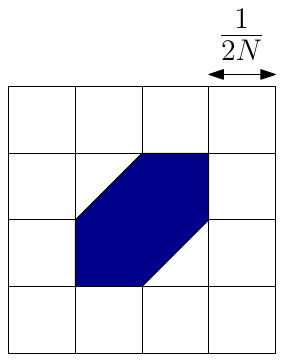}
		\caption{Region $\mathcal{T}$}
        \label{fig:regionT}
	\end{subfigure}
 \caption{The pigeonholes used to cover $[0,1)^{d-1}$ when $d=3$. For the simple application of the Dirichlet approximation theorem, the corresponding pigeonhole would be a square of side length $1/(2N)$.}
 \label{fig:shapeoftiles}
\end{figure}

\begin{proposition}
    For any $d\geq 2$ distinct real numbers $\alpha_1,\alpha_2,\dots,\alpha_d$ and any positive integer $N$, there exists some integers $l_1,l_2,\dots,l_d$ and $q\geq 1$ with 
    \begin{align*}
            q \leq (2N)^{d-1},
            \nonumber \\
            q \leq N(2N+1)^{d-2},
            \nonumber \\
            q \leq \frac{(2N+2)^{d-1}}{d},
        \end{align*}
    such that
        \begin{equation*}
            \abs{(\alpha_i-\alpha_j)-(l_{i}-l_{j})/q} < 1/Nq
        \end{equation*}
    for all $i,j$. 
    \label{prop:approxtheorem}
\end{proposition}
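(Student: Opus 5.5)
The plan is a pigeonhole argument on the torus $[0,1)^{d-1}$, proving each of the three bounds by choosing a different shape of pigeonhole. Set $\beta_i=\alpha_i-\alpha_1$ for $i=2,\dots,d$. For each integer $q'\in\{0,1,\dots,M\}$ consider the point
\[
x(q')=\bigl(\{\beta_2 q'\},\dots,\{\beta_d q'\}\bigr)\in[0,1)^{d-1},
\]
where $\{\cdot\}$ is the fractional part, and put $x_1\equiv 0$. Suppose two of these $M+1$ points, $x(q_1)$ and $x(q_2)$ with $q_1<q_2$, fall in the same pigeonhole. Let $q=q_2-q_1\le M$. Then $\beta_i q=l_i+y_i$ with $l_i\in\mathbb{Z}$ and $y_i=x_i(q_2)-x_i(q_1)$; set $l_1=0$ and $y_1=0$.

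The condition in the proposition, after multiplying by $q$, reads $|y_i-y_j|<1/N$ for all $i,j$, including the pairs with index $1$. In words: the $d$ numbers $0,y_2,\dots,y_d$ must have span $\max-\min<1/N$. So the whole proof reduces to one statement: cover the torus by $M$ pigeonholes $P$, each satisfying
\[
P-P\subseteq K:=\{y:\operatorname{span}(0,y_2,\dots,y_d)<1/N\}.
\]
Having $M+1$ points in $M$ holes then gives $1\le q\le M$.

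\emph{Bound 1.} Use half-open cubes of side $1/(2N)$. For two points in the same cube, each $y_i$ lies in $(-1/(2N),1/(2N))$, so the span is $<1/N$. This needs $(2N)^{d-1}$ holes and recovers $M=(2N)^{d-1}$, exactly as in the earlier argument based on Eq.~\eqref{eq:dirichletBoundEnergy}.

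\emph{Bounds 2 and 3.} The cube is wasteful, because $K$ depends only on the mutual spread of the coordinates together with $0$, not on each coordinate separately.
\begin{itemize}
\item For the third bound I would use the region $\mathcal{T}$ of Fig.~\ref{fig:shapeoftiles}. It consists of configurations for which the $d$ points $0,x_2,\dots,x_d$, read on the circle $\mathbb{R}/\mathbb{Z}$, all lie in a common half-open arc of length $h=1/(2N)$, translated appropriately. If $P$ is such a set, every $y\in P-P$ is a difference of two configurations each confined to an arc of length $h$, so its span is $<2h=1/N$. Such a set has volume $d\,h^{d-1}$, because the position of the point $0$ inside the window contributes a factor $d$ relative to the cube. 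This volume count is where the factor $1/d$ comes from.
\item For the second bound I would use the region $\mathcal{P}$. This is a "hybrid" slab: one direction is cut into $N$ pieces and the remaining $d-2$ directions, measured relative to a chosen coordinate (skewed along the diagonal $(1,\dots,1)$), are cut into $2N+1$ pieces. This gives $N(2N+1)^{d-2}$ holes.
\end{itemize}
In both cases the inclusion $P-P\subseteq K$ is a direct check of span inequalities.

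The main obstacle is the combinatorial step of actually covering $[0,1)^{d-1}$ by at most $(2N+2)^{d-1}/d$ translates of a $\mathcal{T}$-type set (and, similarly, by $N(2N+1)^{d-2}$ sets of type $\mathcal{P}$). Copies of $\mathcal{T}$ do not tile the torus exactly at side $1/(2N)$. My first attempt is to discretise: partition the circle into $n=2N+2$ arcs of length $1/n$. Index a configuration by the multiset of arcs containing $0,x_2,\dots,x_d$, modulo the cyclic rotation that puts the arc of $0$ first. Then group the cells $\prod[k_i/n,(k_i+1)/n)$ into classes according to which window of $n$-arcs of width about $n/(2N)$ arcs contains all points. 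I would then count the classes, aiming for $n^{d-1}/d$ by the rotational symmetry among the $d$ points; the slack from $2N$ to $2N+2$ absorbs the rounding at the window boundaries. I would then verify that the span of any difference within a class stays $<1/N$. If this cyclic counting does not close exactly, the fallback is to bound the number of classes by the volume of the torus divided by the volume of a slightly shrunken $\mathcal{T}$, using that the shrunken copies are disjoint on the torus.
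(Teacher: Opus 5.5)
Your reduction is sound. Working on the torus is legitimate, because changing the integer lift of $y$ is absorbed into the $l_i$, so a pigeonhole only needs $P-P\subseteq K+\mathbb{Z}^{d-1}$. Bound 1 and the check $\mathcal{T}-\mathcal{T}\subseteq K$ are also correct. But for bounds 2 and 3 the whole content of the proposition is the covering step, which you leave open, and neither route you sketch for bound 3 can close it.

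The discretisation fails for $d\ge 3$. Suppose two distinct cells $k,k'$ of side $1/n$, with $n=2N+2$, lie in one class. Write their points as $(k+u)/n$ and $(k'+u')/n$, and let $v=u-u'$ range over $(-1,1)^{d-1}$. You then need an integer vector $D\equiv k-k' \pmod n$ with $|D_i+v_i|<n/N$ and $|D_i-D_j+v_i-v_j|<n/N=2+2/N$ for every such $v$. For $N\ge 3$ the first condition determines $D$ uniquely. Letting $v$ tend to corners of the cube then forces $|D_i|\le 1$ and $D_i=D_j$, i.e.\ $D=\pm\mathbf{1}$. So every class contains at most two cells (essentially the $\mathcal{P}$ shape), and you need at least $(2N+2)^{d-1}/2$ pigeonholes, not $(2N+2)^{d-1}/d$.

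The fallback is only a packing statement. Disjointness of shrunken copies bounds how many copies fit, but says nothing about whether translates of $\mathcal{T}$ cover. The generic packing-to-covering argument (a maximal packing by translates of $\tfrac12 B$ gives a covering by translates of $B$) loses a factor $2^{d-1}$ here.

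The missing ingredient is what the paper proves in Appendix~D: $\mathcal{T}$ tiles $\mathbb{R}^{d-1}$ by translations through the lattice $\tfrac{1}{2N}\Lambda$, where $\Lambda$ is generated by $\mathbf{v}^{(i)}=\mathbf{1}+\mathbf{e}_i$. Equivalently, $\Lambda=\{m\in\mathbb{Z}^{d-1}:\sum_k m_k\equiv 0 \pmod d\}$. This lattice has index $d$, which gives $\mathrm{vol}(\mathcal{T})=d/(2N)^{d-1}$ at once. It contains $2N\mathbb{Z}^{d-1}$ only when $d\mid 2N$, which is why this tiling does not pass to the torus in general.

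The paper therefore drops the torus. It keeps the tiles of this lattice tiling that meet $[0,1)^{d-1}$ and bounds their number by a volume ratio. A clean justification of that count is to average over translates of the tiling. The expected number of tiles meeting $[0,1)^{d-1}$ is $\mathrm{vol}([0,1)^{d-1}-\mathcal{T})/\mathrm{vol}(\mathcal{T})\le(1+1/N)^{d-1}(2N)^{d-1}/d=(2N+2)^{d-1}/d$, so some translate does at least this well.

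Bound 2 likewise needs an explicit tessellation, and your ``sheared slab'' read literally does not give one. Take $x_d$ in an interval of length $1/N$ and each $x_i-x_d$ in an interval of length $1/(2N+1)$. Opposite corners of such a cell give $y_d$ close to $1/N$ and $y_i$ close to $1/N+1/(2N+1)$, and no integer shift repairs this once $N\ge 2$. The shear must be discrete, as in the paper's staircase $\mathcal{P}=[0,h)^{d-1}\cup[h,2h)^{d-1}$ with $h=1/(2N)$, which is tiled using the translation sets $A$ and $B$ of Appendix~C. On the torus this staircase in fact tiles exactly with $N(2N)^{d-2}$ cells. So your torus formulation would give a slightly better second bound once the cells are specified correctly.
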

The proof of this proposition is presented in Appendix \ref{app:proofofprop}. Finally, if we are also given that for some given $n$ and $m$, $\alpha_n-\alpha_m$ is an integer, then we can reduce $d$ by one in the upper bounds for $q$. Then, this modified result holds for $d\geq 3$.

\section{Better Bounds for Recurrence Time }
\label{sec:betterbounds}
Now, let us go back to the proof of Theorem \ref{thm:hamiltonianfirstbound}. We wished to find integers $q$ and $l_k$ such that the following equation holds for all $j,k$.
\be
      \abs{(E_j-E_k)\frac{t_0}{2\pi}q - (l_{j}-l_{k})} \leq  \frac{\epsilon}{\pi}.
     \tag{\ref{eq:energyDiffMain}}
\ee
Now, if we let $\alpha_k=E_k {t_0}/{2\pi}$, $N=\lceil \pi/\epsilon\rceil$, and note that $\alpha_{\max}-\alpha_{\min}=(E_{\max}-E_{\min})t_0/2\pi = 1 \in \Zb$, we can use the modified version of proposition \ref{prop:approxtheorem} to obtain the above equation with an upper bound for $q$. Then, proceeding as before, we can prove the following theorem.

\begin{theorem}
    Consider a Hamiltonian $H$ with the maximum and minimum energy eigenvalues being $E_{\max}$ and $E_{\min}$ respectively. Suppose that the energy spectrum is discrete and finite with number of distinct energy eigenvalues being $d\geq 3$. Then, the system is $\epsilon$-recurrent (with $0<\epsilon<1$) at some time $t_{\mathrm{r}}$ such that
    \begin{align*}
         t_{\mathrm{r}} &\leq \frac{2\pi}{E_{\max}-E_{\min}}\left\lceil \frac{\pi}{\epsilon}\right\rceil\left(2\left\lceil \frac{\pi}{\epsilon}\right\rceil+1\right)^{d-3},
         \\
          t_{\mathrm{r}} &\leq \frac{2\pi}{E_{\max}-E_{\min}}\frac{1}{d-1}\left(2\left\lceil \frac{\pi}{\epsilon}\right\rceil+2\right)^{d-2}.
    \end{align*}
    Note that the second bound also holds true for $d=2$. 
\end{theorem}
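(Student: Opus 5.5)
We follow the proof of Theorem \ref{thm:hamiltonianfirstbound} step by step, replacing the direct appeal to Dirichlet's theorem by the modified form of Proposition \ref{prop:approxtheorem}.

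First, order the eigenvalues as before: $E_1=E_{\min}$, $E_d=E_{\max}$, and $E_1,\dots,E_d$ distinct. Set $t_0=2\pi/(E_{\max}-E_{\min})$ and $\alpha_k=E_k t_0/2\pi$ for $k=1,\dots,d$. These are $d$ distinct reals with $\alpha_d-\alpha_1=1\in\Zb$. Take $N=\lceil\pi/\epsilon\rceil$. The modified Proposition \ref{prop:approxtheorem} (with $d$ reduced by one, as noted after it) then gives integers $l_1,\dots,l_d$ and an integer $q$ satisfying
\begin{equation*}
1\leq q\leq N(2N+1)^{d-3}\quad\text{and}\quad 1\leq q\leq (2N+2)^{d-2}/(d-1),
\end{equation*}
together with $\abs{(\alpha_j-\alpha_k)q-(l_j-l_k)}<1/N\leq\epsilon/\pi$ for all $j,k\leq d$. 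This is exactly Eq.~\eqref{eq:energyDiffMain} with $t_{\mathrm{r}}=qt_0$.

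The degenerate indices $k>d$ are handled by copying $l_k=l_j$ whenever $E_k=E_j$. Eq.~\eqref{eq:energyapprox} then holds for all $j,k\leq D$. From here the argument of Theorem \ref{thm:hamiltonianfirstbound} carries over unchanged: the variance bound (\ref{eq:tracedistancevariance}) combined with Popoviciu's inequality gives $T(\ket{\psi(t_{\mathrm{r}})},\ket{\psi(0)})\leq\epsilon$ for every pure state. Lemma \ref{prop:purity} then extends this to all states.

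Non-triviality is also inherited. We use the state $(\ket{1}+\ket{d})/\sqrt2$, which exceeds distance $\epsilon$ at some $t'<t_0\leq t_{\mathrm{r}}$, since $q\geq1$. Multiplying the two bounds on $q$ by $t_0$ gives the two stated bounds on $t_{\mathrm{r}}$.

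\textbf{Main obstacle.} The delicate point is justifying the modified proposition, i.e.\ that an integer difference $\alpha_n-\alpha_m$ lets us drop $d$ by one. We plan to argue as follows. Write $\alpha_d=\alpha_1+1$. Apply Proposition \ref{prop:approxtheorem} to the $d-1$ numbers $\alpha_1,\dots,\alpha_{d-1}$ (requiring $d-1\geq2$), and set $l_d=l_1+q$. Then $(\alpha_d-\alpha_j)q-(l_d-l_j)=(\alpha_1-\alpha_j)q-(l_1-l_j)$, so every condition involving index $d$ reduces to one already guaranteed. This yields the bounds with $d-1$ in place of $d$.

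The remaining case is the second bound for $d=2$, where the reduction would leave a single number. There we choose $q=1$ directly: $\alpha_2-\alpha_1=1$ is matched exactly by $l_2-l_1=1$. The claimed bound is $(2N+2)^0/1=1$, so the estimate still holds.
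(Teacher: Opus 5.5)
Your proposal is correct and follows the paper's route. You set $\alpha_k=E_kt_0/2\pi$ and $N=\lceil\pi/\epsilon\rceil$, invoke Proposition~\ref{prop:approxtheorem} with $d$ reduced by one because $\alpha_d-\alpha_1=1$, and rerun the variance/Popoviciu and non-triviality arguments of Theorem~\ref{thm:hamiltonianfirstbound} with $t_{\mathrm{r}}=qt_0$; your explicit justification of the reduction (apply the proposition to $\alpha_1,\dots,\alpha_{d-1}$ and set $l_d=l_1+q$) and the direct choice $q=1$ for $d=2$ are sound and supply details the paper only asserts.
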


Similarly, we can use proposition \ref{prop:approxtheorem} to prove the following theorem for discrete time evolution.

\begin{theorem}
Suppose we have a unitary $U$, having a discrete and finite spectrum, with $d \geq 2 $ distinct eigenvalues $\{e^{i\phi_k}\}$ where $0\leq \phi_k<2\pi$. Then, for any $0<\epsilon\leq1/2$, the quantum system is $\epsilon$-recurrent at some time $m_{\mathrm{r}}\in\mathbb{N}$ such that
 \begin{subequations}
    \begin{align}
           m_{\mathrm{r}} &\leq \frac{\pi}{\max_{jk}R(\phi_j,\phi_k)}\left\lceil \frac{\pi}{\epsilon}\right\rceil\left(2\left\lceil \frac{\pi}
           {\epsilon}\right\rceil+1\right)^{d-2}, 
            \\
            m_{\mathrm{r}} &\leq \frac{\pi}{\max_{jk}R(\phi_j,\phi_k)}\frac{1}{d}\left(2\left\lceil \frac{\pi}{\epsilon}\right\rceil+2\right)^{d-1}.
    \end{align}
 \end{subequations}
\end{theorem}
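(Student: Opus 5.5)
We will follow the proof of the earlier discrete-time theorem essentially line by line. The only change is that the simultaneous Dirichlet step is replaced by Proposition~\ref{prop:approxtheorem}. As before, we label the eigenvalues $e^{i\phi_k}$ of $U$ so that $\phi_1,\dots,\phi_d$ are the distinct phases. Any further eigenvalues repeat one of these and inherit the same integer $l_k$.

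\textbf{Reduction to the Proposition.} Fix a positive integer $m_0$ and set $\alpha_k=\phi_k m_0/2\pi$ for $k=1,\dots,d$ and $N=\lceil\pi/\epsilon\rceil$. If the $\alpha_k$ are not distinct, we merge the coinciding ones. This only lowers the effective $d$, and since $(2N)^{d-1}$, $N(2N+1)^{d-2}$ and $(2N+2)^{d-1}/d$ are nondecreasing in $d$, the stated bounds still hold. If all $\alpha_k$ coincide, $q=1$ works trivially.

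Proposition~\ref{prop:approxtheorem} then gives integers $l_k$ and $q\geq1$ with $\abs{(\alpha_j-\alpha_k)q-(l_j-l_k)}<1/N\leq\epsilon/\pi$. Multiplying by $2\pi$ gives exactly Eq.~\eqref{eq:energyapproxdiscrete} at $m_{\mathrm{r}}=m_0q$. The variance/Popoviciu argument from the proof of Theorem~\ref{thm:hamiltonianfirstbound} then yields $T(\ket{\psi(m_{\mathrm{r}})},\ket{\psi(0)})\leq\epsilon$ for every pure state, and Lemma~\ref{prop:purity} extends this to all states. Moreover $q$ obeys the second and third bounds of the Proposition. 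So, in place of Eq.~\eqref{eq:generalm0}, we obtain
\begin{equation*}
m_0\leq m_{\mathrm{r}}\leq m_0\min\left\{N(2N+1)^{d-2},\ \frac{(2N+2)^{d-1}}{d}\right\}.
\end{equation*}
The modified, reduced-$d$ version of the Proposition is not used here. Unlike the Hamiltonian case, there is no reason for a phase difference times $m_0/2\pi$ to be an integer, which is why the exponents are $d-2$ and $d-1$ rather than $d-3$ and $d-2$.

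\textbf{Non-trivial recurrence and choice of $m_0$.} We copy the earlier argument. Pick eigenstates $\ket{1},\ket{2}$ with $R(\phi_1,\phi_2)=\max_{jk}R(\phi_j,\phi_k)$ and take $\ket{\varphi(0)}=(\ket{1}+\ket{2})/\sqrt2$. Some integer $m'<\pi/R(\phi_1,\phi_2)$ has $m'R\in(\pi/3,\pi]$, so $T(\ket{\varphi(m')},\ket{\varphi(0)})>1/2\geq\epsilon$. We then choose $m_0\geq m'$ so that the witness time satisfies $m'\leq m_0\leq m_{\mathrm{r}}$.

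\textbf{Main obstacle.} The only delicate point is integrality: $\pi/\max R$ need not be an integer. The cleanest fix is $m_0=m'$ itself. Since $m_0=m'<\pi/\max R$, this gives the stated bounds with a strict inequality to spare. One must also check that $m'\geq1$ exists, which holds because $0<R\leq\pi$. If $m'R=\pi$ we get $m'=\pi/R$ exactly and the bound is attained with equality. In every case the stated inequalities follow.
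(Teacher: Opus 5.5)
Your proposal is correct and follows the paper's route. Like the paper, you rerun the proof of the earlier discrete-time theorem with $\alpha_k=\phi_k m_0/2\pi$ and $N=\lceil\pi/\epsilon\rceil$, replace the Dirichlet step by the second and third bounds of Proposition~\ref{prop:approxtheorem}, and use the superposition of the two eigenstates with maximal $R(\phi_j,\phi_k)$ as the non-trivial witness; your choice $m_0=m'$ is actually more careful than the paper, which sets $m_0=\pi/\max_{jk}R(\phi_j,\phi_k)$ even though this need not be an integer (so $m_0q$ need not be a valid discrete time), whereas your integer $m'\leq\pi/\max_{jk}R(\phi_j,\phi_k)$ gives the same bounds.
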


\section{Discussion}
\label{sec:discussion}
For any unitarily evolving finite dimensional quantum system, we can find a  time at which all states are arbitrarily close to their initial configuration, and at least one state has deviated significantly over this time interval. We have obtained bounds on the recurrence time for this notion of system recurrence, in both continuous and discrete time. These scale approximately like $(\frac{1}{\epsilon})^{d-2}$ or 
$(\frac{1}{\epsilon})^{d-1}$ respectively, where $d$ is the number of distinct energies for the system, and $\epsilon$ is the accuracy of recurrence. Our approach also introduced some mathematical techniques to approximate the differences between real numbers via rationals, which may be of interest in other circumstances. 

Although we would not expect similar recurrences to  occur in general for infinite dimensional systems (even when their spectra are discrete), we would expect that similar results could be obtained by placing a restriction on initial states and Hamiltonians (e.g. a finite number of energy eigenvalues in any finite energy interval, and a bound on the ground state energy of the Hamiltonian and the expected energy of the initial state), as in \cite{wallace2015recurrence}.

It would also be interesting to consider the recurrence of open quantum systems or general quantum channels (e.g. systems evolving via Lindblad equations or non-unitary CPTP maps) in a similar context. While we would expect that such evolutions would not yield recurrences for all $\epsilon$ due to contractivity, it would be good to obtain bounds on the recurrence behaviour in such cases. 

\section*{Note Added}
\label{sec:note}
After the completion of this work, we became aware of a recent independent work of Kotowski and Oszmaniec \cite{kotowski2026tight}, which has related results. The main focus of their paper is on individual state recurrence, where they obtain many interesting results. In the supplementary information (Theorem S1), they also obtain a similar result to our Theorem \ref{thm:hamiltonianfirstbound}. 


\begin{acknowledgments}
We are grateful to Alex Essery for his insightful discussions on the tiling proof.
\end{acknowledgments}

\appendix

\section{Proof of Lemma \ref{prop:purity}}
\label{app:lemmaproof}

Given some arbitrary $t \in \tau$, it is sufficient to show that (i) $T(\rho(t),\rho(0))\leq\epsilon$ for all $\rho(0)$ iff $T(\ket{\psi(t)},\ket{\psi(t)})\leq\epsilon$ for all $\ket{\psi(0)}$, and (ii) $T(\sigma(t),\sigma(0))>\epsilon$ for some $\sigma(0)$ iff $T(\ket{\varphi(t)},\ket{\varphi(t)})>\epsilon$ for some $\ket{\varphi(0)}$.

For (i), the `only if' direction is trivial. For the `if' direction, consider some arbitrary density operator $\rho(0)=\sum_i p_i \ketbra{\psi_i(0)}$. For unitary time evolution, we can write $\rho(t)=\sum_i p_i \ketbra{\psi_i(t)}$. Then, we have
\be
    T(\rho(t),\rho(0)) \leq \sum_i p_i T (\ket{\psi_i(t)},\ket{\psi_i(0)}) \leq \epsilon
\ee
where the first inequality follows from joint convexity of the trace distance and the second follows from assumption.

For (ii), the `if' direction is trivial. For the
`only if' direction, we write $\sigma(0)=\sum_i p_i \ketbra{\varphi_i(0)}$, and $\sigma(t)=\sum_i p_i \ketbra{\varphi_i(t)}$ (unitary evolution). Then, we have
\be
\sum_i p_i T (\ket{\varphi_i(t)},\ket{\varphi_i(0)}) \geq  T(\sigma(t),\sigma(0)) > \epsilon
\ee
where, again, the first inequality follows from joint convexity of the trace distance and the second from assumption. 

Noting that the first expression in the above equation is an average, we can deduce that there must exist some $j$ such that $ T (\ket{\varphi_j(t)},\ket{\varphi_j(0)})>\epsilon$.

\section{Approximation Theorem by Tiling (Proof of Proposition \ref{prop:approxtheorem})}
\label{app:proofofprop}
Suppose we are given some real numbers $\{\alpha_i\}_{i=1}^d$ and some natural number $N$. We suppose that $\alpha_1$ is the smallest. Let $M$ be some arbitrary whole number. 
Note that, we can write, for $m=0,1,\dots M$ and $i=2,3,\dots d$,
\be
     (\alpha_i-\alpha_1)m = x^{(m)}_{i}+k^{(m)}_{i}
     \label{eq:decomposition}
\ee
where $k^{(m)}_{i}$ are integers and $0\leq x^{(m)}_{i}< 1$.

Now, for each $m$, we define a vector $\mathbf{x}^{(m)}$ by letting its components be $x^{(m)}_{i}$. Also, consider the region $\mathcal{C}=[0,1)^{(d-1)}$. Then, $\mathbf{x}^{(m)}\in\mathcal{C}=[0,1)^{(d-1)}$ for all $m$.

Now, suppose there exists a partition $P$ for some set $\mathcal{C}^\prime \supseteq \mathcal{C}$, such that for every cell $\mathcal{R}\in P$, and for any two points $\mathbf{z},\mathbf{y}\in\mathcal{R}$,
\begin{subequations}
\begin{align}
    \abs{(z_i - y_i)-(z_j - y_j)} &< \frac{1}{N},
    \label{eq:tiledistance}
    \\
    \abs{z_i-y_i} &< \frac{1}{N}.
    \label{eq:tiledistanceindividual}
\end{align}\label{eq:tiledistancemain}
\end{subequations}
for all $i,j$. Then, we pick $M = \abs{P}$. 

Now, we have $M+1$ vectors $\mathbf{x}^{(m)}$ as $m=0,1,2,\dots,M$, but the number of tiles in the partition are $M$. Hence, using the pigeonhole principle, we must have some $\mathbf{x}^{(m)}$
and $\mathbf{x}^{(n)}$, with $n>m$, lying in the same cell. Then, from Eq.~\eqref{eq:tiledistancemain}, it follows that

\begin{subequations}
\begin{align}
    \abs{(x^{(n)}_{i} - x^{(m)}_{i})-(x^{(n)}_{j} - x^{(m)}_{j})} < \frac{1}{N},
    \label{eq:tesselationbound}
    \\
    \abs{x^{(n)}_{i} - x^{(m)}_{i}} < \frac{1}{N}.
    \label{eq:tesselationindividual}
\end{align}\label{eq:tesselationboundmain}
\end{subequations}

Note that $1\leq n - m \leq M$. Take $q= n - m$. For $i,j = 2,3,\dots d$, consider
\begin{align}
    (\alpha_i-\alpha_j)q &=  (\alpha_i-\alpha_1)q-(\alpha_j-\alpha_1)q
    \nonumber\\
    &= (k^{(n)}_{i} - k^{(m)}_{i})-(k^{(n)}_{j} - k^{(m)}_{j}) 
    \nonumber\\
    &\quad+ (x^{(n)}_{i} - x^{(m)}_{i})-(x^{(n)}_{j} - x^{(m)}_{j}).
\end{align}
We also have, for $i=2,3,\dots d$,
\begin{align}
    (\alpha_i-\alpha_1)q 
    &= (k^{(n)}_{i} - k^{(m)}_{i}) + (x^{(n)}_{i} - x^{(m)}_{i}).
\end{align}

Then, with $l_{i}= k^{(m)}_{i} - k^{(n)}_{i}$ for $i \in\{2,3,\dots d\}$ and $l_1 = 0$, we obtain,
\begin{equation}
    \abs{(\alpha_i-\alpha_j)q-(l_{i}-l_{j})} < 1/N
\end{equation}
for all $i,j$ with $1\leq q \leq M$.

Now, note that if we have a region $\mathcal{R}$ such any two points in $\mathcal{R}$ satisfy Eqs.~\eqref{eq:tesselationboundmain}, all translations of $\mathcal{R}$ must also satisfy the same. Hence, if we can construct a tessellation using $\mathcal{R}$ as a tile, we obtain the partition we desired, with $M$ being the number of tiles required to cover the region $\mathcal{C}$. 

Consider a hypercubic tile of side length $1/(2N)$. Now, any two points in this hypercube satisfy the requisite conditions (Eqs.~\eqref{eq:tesselationboundmain}). Noting that it would take $M=(2N)^{d-1}$ such tiles to tessellate $\mathcal{C}$, we have obtained our first bound. 

Now, consider another region,
\be
    \mathcal{P} = \left[0,\tfrac{1}{2N}\right)^{d-1}\cup\left[\tfrac{1}{2N},\tfrac{1}{N}\right)^{d-1}
    \label{eq:twocubedef}
\ee
which is a union of two hypercubes (figure \ref{fig:regionP}). In Appendix \ref{app:twocubestiling}, we show that we can use this region to tile some $\mathcal{C}^\prime \supseteq \mathcal{C} = [0,1)^{d-1}$ using $N(2N+1)^{d-2}$ tiles. This gives us the bound $M=N(2N+1)^{d-2}$. For large $N$, we have $M\approx (2N)^{d-2}/2$, which is a small improvement over the previous bound.

Finally, we consider the following region (figure \ref{fig:regionT}).
\begin{align}
    \mathcal{T} = \Big\{\mathbf{x}\in \mathbb{R}^{d-1} \,|\, -\tfrac{1}{2N} \leq x_i < \tfrac{1}{2N}
    \forall i \nonumber\\
    \text{ and} -\!\tfrac{1}{2N} \leq  (x_j-x_i) < \tfrac{1}{2N} \,\,\forall j>i \Big\}.
    \label{eq:polytopedef}
\end{align}
In Appendix \ref{app:convexhulltiling}, we show that $\mathcal{T}$ tiles $R^{d-1}$. Now, one can reason that the smallest partition of $\mathcal{C}$ formed by tiling it with $\mathcal{T}$ would have to be a subset of $[-1/(2N),1+1/(2N)]^{d-1}$ (as some tiles would be protruding out of $\mathcal{C}$). Now, noting that the volume of each tile is $d/(2N)^{d-1}$ (Appendix \ref{app:volume}), we claim that the number of tiles have to be at most 
\be
    {\frac{(2N+2)^{d-1}}{d}}
\ee
which is just the ratio of the hypervolume of $[-1/(2N),1+1/(2N)]^{d-1}$ to the hypervolume of $\mathcal{T}$. For $N\gg1$, we have $M \approx (2N)^{d-1}/d$.

\section{Tiling via Union of Two Hypercubes}
\label{app:twocubestiling}
In this section, we shall show that $\mathcal{P}$ can be used to construct a tiling covering $\mathcal{C}$. We shall first consider the case when $d \geq 3$. Recall,
\be
    \mathcal{P} = \left[0,\tfrac{1}{2N}\right)^{d-1}\cup\left[\tfrac{1}{2N},\tfrac{1}{N}\right)^{d-1}
    \tag{\ref{eq:twocubedef}}.
\ee
It can be verified that any two points in $\mathcal{P}$ satisfy Eq.~\eqref{eq:tesselationboundmain}. We shall construct a tessellation of a region $\mathcal{C}^\prime \supseteq [0,1)^{d-1}$ by tiling $\mathcal{P}$. 

Consider the region 
\begin{align}
\mathcal{B}&=\left[-\tfrac{1}{2N},1\right)^{d-2}\times[0,\tfrac{1}{2N}) 
\nonumber\\
&\quad\cup \left[0,1+\tfrac{1}{2N}\right)^{d-2}\times \left[\tfrac{1}{2N},\tfrac{1}{N}\right).
\end{align}

Now, we can tessellate this region $\mathcal{B}$ by translating $\mathcal{P}$ by the following set of vectors 
\be
    A = \left\{\frac{1}{2N}\sum_{k=1}^{d-2}n_k \mathbf{e}_k | n_k \in \{-1,0,1,\dots,2N-1\}\right\}
\ee
where $\mathbf{e}_{j}$ is the unit vector in the $j^{\text{th}}$ direction. Note that $\abs{A}=(2N+1)^{d-2}$. 

Now, suppose we have some tiling formed by translating $\mathcal{B}$ by the following set of vectors 
\be
    B =\left\{\frac{k}{N}\,\mathbf{e}_{d-1}| k = 0,1,\dots,N-1\right\}.
\ee
Here, the number of tiles is $\abs{B}=N$. Let the union of all the tiles be the region $\mathcal{C}'$. One can verify that $[0,1)^{d-1}\subset \mathcal{C}'$. 

To summarise, we can tile $\mathcal{B}$ by translating $\mathcal{P}$ by all the vectors in $A$, and we can tile $\mathcal{C}^\prime$  by translating $\mathcal{B}$ by all the vectors in $B$. It follows that we can tessellate $\mathcal{C}^\prime$ by considering translations of $\mathcal{P}$, with the number of tiles being $\abs{A}\abs{B}=N(2N+1)^{d-2}$. 

Finally, for $d=2$, we can simply partition $(0,1]$ into $(k/N,(k+1)/N]$ with $k=0,1,\dots,N-1$, and obtain the bound $N$. Thus, the above result also works for $d=2$.

\section{Tiling via Convex Hull of Two Hypercubes}
\label{app:convexhulltiling}
Recall that we had defined the region $\mathcal{T}$ as
\begin{align}
    \mathcal{T} = \Big\{\mathbf{x}\in \mathbb{R}^{d-1} \,|\, -\tfrac{1}{2N} \leq x_i < \tfrac{1}{2N}
    \forall i \nonumber\\
    \text{ and} -\!\tfrac{1}{2N} \leq  (x_j-x_i) < \tfrac{1}{2N} \,\,\forall j>i \Big\}.
    \tag{\ref{eq:polytopedef}}
\end{align}

 We shall prove that $\mathcal{T}$ tiles $\mathbb{R}^{d-1}$ with the translation vectors being $\mathbf{v}^{(i)}$,
\begin{align}
    \mathbf{v}^{(i)} &= (1,1,\dots,1,\overbrace{2}^{i^\text{th} \text {component}},1,\dots,1).
\end{align}
Let $\mathbf{x}$ be any point in $\mathbb{R}^{d-1}$. We wish to show that there exist unique integers $\{n_i\}$ such that $\mathbf{x}-\sum_i \tfrac{n_i}{2N} \mathbf{v}^{(i)}\in \mathcal{T}$. 

Now, $\mathbf{x}-\sum_i \tfrac{n_i}{2N} \mathbf{v}^{(i)}\in \mathcal{T}$ iff
\be
    2N\mathbf{x}= \sum_i (n_i + \beta_i) \mathbf{v}^{(i)},
    \label{eq:decompx}
\ee
and
\begin{subequations}
    \begin{align}
    (\mathrm{I}) & \quad \beta_{\max} + \sum_k \beta_k < 1, \\
    (\mathrm{II}) & \quad \beta_{\min} + \sum_k \beta_k \geq  -1, \\
    (\mathrm{III}) & -1 \leq \beta_j - \beta_i < 1 \text{ for all } j > i
    \end{align}
\end{subequations}
for some real numbers $\{\beta_i\}$. We shall show existence of such $\{n_i\}$ first.

First, we shall show that we can modify condition (III) to be a non-strict inequality on both sides. In particular, for existence, we wish to replace the above conditions by the following.
\begin{subequations}
 \begin{align}
    (\mathrm{I}) & \quad \beta_{\max} + \sum_k \beta_k < 1, \\
    (\mathrm{II}) & \quad \beta_{\min} + \sum_k \beta_k \geq  -1, \\
    (\mathrm{III}') & \quad \beta_{\max} - \beta_{\min} \leq 1 
\end{align}
\label{eq:newconditions}
\end{subequations}

To this goal, we consider the case when the new set of conditions are satisfied but the old set of conditions might not be. Suppose, for some $\mathbf{x}$ we are given some integers $\{n_i\}$ and $\{\beta_i\}$ such that Eq.~\eqref{eq:decompx} is true and conditions (I) and (II) are satisfied. Moreover, also suppose that $\beta_{\max}-\beta_{\min} = 1$.
Let $J,I \subset \{1,2,\dots,d-1\}$ such that for all $i\in I$ and $j\in J$, $\beta_j=\beta_{\max}$ and $\beta_i=\beta_{\min}$. Now, if $j < i$ for all $j \in J$ and $i \in I$, the previous conditions are satisfied. 
Consider the case when the above is not true. That is, there exists some $i\in I$ and $j \in J$ such that $j>i$. Clearly, condition (III) is not satisfied. Then, for such $i,j$, we take $\beta_j \to \beta_j-1$ and $\beta_i \to \beta_i+1$, along with $n_j \to n_j + 1$ and $n_i \to n_i - 1$. One can verify that all the conditions are now satisfied. Thus, to show existence, it is sufficient to find $\{\beta_k\}$ and $\{n_k\}$ such that  Eq.~\eqref{eq:decompx} is true and the new set of conditions (Eqs.~\eqref{eq:newconditions}) are satisfied.

Now, due to the linear independence of $\mathbf{v}^{(i)}$, we can write every $\mathbf{x}$ as
\be
    2N\mathbf{x}= \sum_i (n_i + \beta_i) \mathbf{v}^{(i)}
\ee
where $-1/2<\beta\leq 1/2$ and $n_i$ is an integer. Without loss of generality, we suppose that $\beta$ are arranged in ascending order, i.e., $\beta_1\leq\beta_2\leq \dots \leq\beta_{d-1}$. Clearly, condition (III$^\prime$) is satisfied. Now, we have three cases: (i) both (I) and (II) are also satisfied, (ii) (I) is not satisfied but (II) is, and (iii) (II) is not satisfied but (I) is. For case (i), we obtain what we wanted to show. 

Consider case (ii). As condition (I) is not satisfied, there exist a smallest $j$ such that
\be
    \beta_j + \sum_k \beta_k \geq d-j.
\ee
Now, for all $l \geq j$, let $\beta_l \to \beta_l - 1$ and $n_l \to n_l + 1$. Then, we can verify that conditions (I), (II) and (III$^\prime$) are now satisfied.

Finally, consider case (iii). As condition (II) is not satisfied, there exists a largest $j$ such that
\be
    \beta_j + \sum_k \beta_k < -j.
\ee
Then, for all $l \leq j$, let $\beta_l \to \beta_l + 1$ and $n_l \to n_l - 1$. We can verify that conditions (I), (II) and (III$^\prime$) are now satisfied. Thus, we have shown existence.

Now, we shall show uniqueness by contradiction. Suppose we have the following decompositions,
    \begin{align}
        2N\mathbf{x} &= \sum_i (n_i + \beta_i) \mathbf{v}^{(i)}
        \nonumber \\
        &=  \sum_i (m_i + \gamma_i) \mathbf{v}^{(i)}.
    \end{align}
where $m_i$ and $n_i$ are integers (with at least one $k$ such that $m_k\neq n_k$) and the conditions (I)-(III) are satisfied for both $\beta_i$ and $\gamma_i$. In other words, the $\{\beta_i\}$ we found earlier are not unique.

Because $\mathbf{v}^{(i)}$ are all linearly independent, $m_i+ \gamma_i = n_i+\beta_i$ for all $i$. It follows that, without loss of generality, we can assume that $m_i \geq n_i$ for all $i$. (If this is not true, than we can write two new decompositions such that $m_i^\prime = \max(m_i,n_i)$, $ n_i^\prime = \min(m_i,n_i)$ and $\gamma^\prime_i,\beta^\prime$ are picked from $\gamma_i,\beta_i$ accordingly.) Let $N_i = m_i - n_i \geq 0$ and we have at least one $k$ such that $N_k \neq 0$. Then, we can write $\beta_i = \gamma_i + N_i$.

Now, from conditions (I) and (II), we have the following for all $i$,
\begin{align}
    -1 \leq \gamma_i + \sum_k \gamma_k < 1,
    \label{eq:gammacondition}
    \\
    -1 \leq \beta_i + \sum_k \beta_k < 1.
    \label{eq:betacondition}
\end{align}

Now, using $\beta_i = \gamma_i + N_i$ in Eq.~\eqref{eq:betacondition}, we obtain
\be
    -1 \leq \gamma_i + \sum_k \gamma_k + N_i + \sum_k N_k < 1
    \label{eq:betaderived}
\ee
for all $i$. Now, for there to exist some $\{\gamma_i\}$, such that Eq.~\eqref{eq:betaderived} and Eq.~\eqref{eq:gammacondition} are both true for all $i$, we must have $N_i = 0$ for all $i$. We have reached a contradiction. Thus, the $\{n_i\}$ we found earlier must be unique.

\section{Hyper-volume of $\mathcal{T}$}
\label{app:volume}
Consider the following region
\be
    \mathcal{T}^{(D)}_{s} = \{ \mathbf{x} \in \mathbb{R}^{D} | \, \abs{x_i} \leq s \text{ and } \abs{x_i-x_j} \leq s \text{ for all } i,j\}.
\ee

Suppose we wish to find its hyper-volume (denoted as $\text{Vol}_{D}(\mathcal{T}^{(D)}_{s})$). To that goal, let $X_i$ be $D$ uniform i.i.d in $[-s,s]$. Then, let $M= \max_{i} X_i$ and $L = \min_{i} X_i$. Then, we can say that
\be
    \text{Vol}_{D}(\mathcal{T}^{(D)}_{s}) = (2s)^{D}\text{Prob}(M-L \leq s).
    \label{eq:probandvolume}
\ee

Now, consider the joint probability density function of $M$ and $L$,
\be
    \text{Prob}(M=m,L=l)=  \frac{D (D-1) (m-l)^{D-2}}{(2s)^D}.
\ee
Then, we have
\be
    \text{Prob}(M - L \leq 1) = \int_{I}   \frac{D (D-1) (m-l)^{D-2}}{(2s)^D} \,dm\, dl
\ee
where $I=\{(m,l)\,|\, 0\leq m-l\leq s \text{ and} -s \leq l \leq m \leq s \}$. Substituting this in Eq.~\eqref{eq:probandvolume}, we get
\be
    \text{Vol}_{D}(\mathcal{T}^{(D)}_{s}) =  D (D-1) \int_{I} (m-l)^{D-2} \,dm\, dl
\ee
\be
    \text{Vol}_{D}(\mathcal{T}^{(D)}_{s}) = (D+1)s^D.
\ee

Now, note that $\mathcal{T}^{(d-1)}_{1/(2N)}=\mathcal{T}+\partial\mathcal{T}$, where $\mathcal{T}$ is the region defined in Eq.~\eqref{eq:polytopedef}. Using the fact that including the boundary doesn't alter the hyper-volume, we finally get
\be
    \text{Vol}_{d-1}(\mathcal{T})=\frac{d}{(2N)^{d-1}}.
\ee

\bibliography{myref}


\end{document}